\newtheorem{theorem}{Theorem}
\begin{document}

\title{Statistical QoS Provisioning for Underwater Magnetic Induction Communication}

\author{
	\IEEEauthorblockN{
		Zhichao Li\IEEEauthorrefmark{1}, 
		Jianyu Wang\IEEEauthorrefmark{1}, 
		Wenchi Cheng\IEEEauthorrefmark{1}, 
		and Yudong Fang\IEEEauthorrefmark{2}} 
	\IEEEauthorblockA{\IEEEauthorrefmark{1}State Key Laboratory of Integrated Services Networks, Xidian University, Xi’an, China}
	\IEEEauthorblockA{\IEEEauthorrefmark{2}Ministry of Emergency Management Big Data Center, Beijing, China}
    E-mail: \textit{24011210926@stu.xidian.edu.cn, \{wangjianyu, wccheng\}@xidian.edu.cn, fangyudong9713@ustc.edu} 
}

\maketitle

\begin{abstract}
    Magnetic induction (MI) communication, with stable channel conditions and small antenna size, is considered as a promising solution for underwater communication network. However, the narrowband nature of the MI link can cause significant delays in the network. To comprehensively ensure the timeliness and effectiveness of the MI network, in this paper we introduce a statistical quality of service (QoS) framework for MI communication, aiming to maximize the achievable rate while provisioning delay and queue-length requirements. Specifically, we employ effective capacity theory to model underwater MI communication. Based on convex optimization theory, we propose a current control strategy that maximizes the effective capacity under the constraints of limited channel capacity and limited power. Simulations demonstrate that the current control strategy proposed for MI communication differs significantly from that in the conventional statistical QoS provisioning framework. In addition, compared to other current control strategies, the proposed strategy substantially improves the achievable rate under various delay QoS requirements.
\end{abstract}

\begin{IEEEkeywords}
Magnetic induction (MI), statistical quality of service (QoS) provisioning, effective capacity, convex optimization, current control.
\end{IEEEkeywords}

\section{Introduction}

Electromagnetic (EM) wave communication has developed significantly, with substantial improvements in achievable rate, bit error ratio, bandwidth utilization, etc. However, in underwater, dynamic underwater conditions and harsh propagation environments severely hinder the propagation of EM waves. In such environments, magnetic induction (MI) communication, utilizing the mutual inductance effect of the coils, offers advantages such as high penetration efficiency, stable channel conditions, and small antenna size\cite{5199549,10210297}. MI communication is widely used in marine resource surveys, underwater ecological monitoring and submarine communications, etc.\cite{8464253}.

\begin{figure*}[h]
    \centering
    \includegraphics[width = 7in]{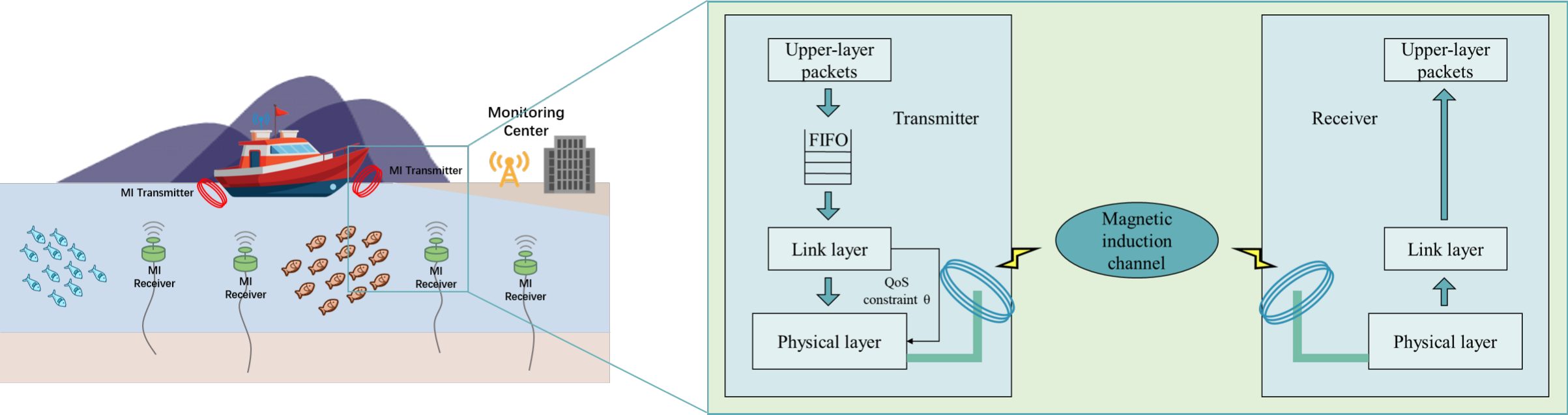}
    \caption{Underwater MI networks and statistical QoS provisioning framework with finite-length queue.}
    \label{Underwater MI networks and equivalent circuits.}
\end{figure*}

Several studies have been conducted on MI communications. The authors of \cite{5199549} comprehensively characterized the underground MI communication channel. Based on this channel model, MI waveguide technology was developed to reduce MI path loss, effectively improving the communication performance of MI systems. MI systems used for wireless communication or power transfer often experience coil misalignment due to movement, which causes the coils not to be perfectly parallel and aligned. To analyze completely misaligned links, the authors of \cite{7794767} proposed a random misalignment model, assuming coil orientations are uniformly and randomly distributed. He derived the mutual inductance probability density function (PDF) for practical configurations consisting of single coils and collinear coil arrays. Compared to terrestrial or underground environments, underwater environments exhibit more uncertainty. In the presence of water flow, the positions of the transmitter and receiver change as the communication channel varies. To address this, the authors of \cite{8851668} developed a dynamic underwater MI model and provided path loss and channel transfer functions. The authors of \cite{9681845} developed the Multi-frequency Resonating Compensation (MuReC) coil based Multiple-band, Multiple-Input, and Multiple-Output (MbMIMO) simultaneous wireless information and power transfer (SWIPT), which can generate multiple resonant frequencies with one coil. 

To enhance the system throughput and interference resistance of MI communication, many studies have explored the application of multiple input multiple output (MIMO) technology in MI systems. In near-field magnetic induction (NFMI) communication, the authors of \cite{7428884} proposed using heterogeneous antenna arrays with multiple pole antennas to overcome strong crosstalk between transmitters and enable MIMO transmission. This approach effectively eliminates crosstalk, enhancing the capacity. The authors of \cite{9580360}, proposed an optimal current control scheme in a closed form for multiple input multiple output orthogonal frequency division multiplexing (MISO-OFDM) MI communication system, which aims to achieve the maximum achievable rate under a transmit power constraint.

However, the narrowband nature of the MI link can cause significant delays in the network. To achieve efficient transmission in an MI communication system with limited delay and queue length, we present a MI communication system model with based on statistical quality of service (QoS) provisioning. It provides closed-form expressions for transmit power, receive signal-to-noise ratio (SNR), and effective capacity. We formulate the optimization problem that maximizes the effective capacity, with the maximum capacity and power constraints. The closed-form expression for the optimal current control strategy is developed. Numerical results validate that the proposed strategy differs from traditional QoS-aware power control strategies for EM wave communication and is optimal compared to other approaches.

The remainder of this paper is organized as follows. Section II presents the MI communication system model with statistical QoS provisioning. Section III derives the closed-form expression for the current control strategy to maximize effective capacity. Simulation results in Section IV validate the superiority of the proposed current control scheme. Conclusion is drawn in Section V.

\section{system Model}

As shown in Fig.~\ref{Underwater MI networks and equivalent circuits.}, we consider a typical point-to-point MI communication link in an underwater MI network and establish a statistical QoS provisioning framework for the link. In this model, the underwater receiver and the MI transmitter on the ship are each equipped with a coil for signal transmission and reception. When the upper layer of the transmitter sends packets to the lower layer, they are queued in a first-in-first-out (FIFO) sequence before being transmitted through the physical layer to the receiver.

The operating frequency is denoted by $f$. The current and voltage across the current source of the transmitter are denoted as $I_t$ and $U_t$, respectively. The receiver obtains the signal through mutual inductance and uses the voltage across the load resistor $R_L$ as the receive signal. The resistance, self-inductance, and the resonant capacitance corresponding to the transmit coil are $R_t$, $L_t$, and $C_t$, respectively. The current in the receive circuit is denoted as $I_r$, and the resistance, self-inductance, and the resonant capacitance corresponding to the receive coil are $R_r$, $L_r$ and $C_r$, respectively, where $j2\pi fL_t=\frac{1}{j2\pi fC_t}$, $j2\pi fL_r=\frac{1}{j2\pi fC_r}$. The mutual inductance between the transmit and receive coils is denoted as $M$. 

\subsection{Stochastic Misalignment Model for MI Links}
\label{2.A}
MI communication has the advantage of stable channel conditions. When two coils are parallel and aligned with each other and the distance between them is relatively small, we can calculate the mutual inductance between the two coils using Stokes' theorem\cite{frankl1986electromagnetic}:

\begin{equation}
    M_{\max}=\mu \pi N_tN_r\frac{a_{t}^{2}a_{r}^{2}}{2d^3},
\end{equation}
where $N_t$ is the number of turns of the transmit coil, $N_r$ is the number of turns of the receive coil, $a_t$ is the radius of the transmit coil, $a_r$ is the radius of the receive coil, and $d$ is the distance between the centers of the two coils.

The authors of \cite{7794767} proposed a random misalignment model, which assumes that the coil orientations are uniformly distributed, and it derives the mutual inductance PDF for practical configurations consisting of a single coil and collinear coil arrays. From the PDF, the expected attenuation due to misalignment can be determined, and an alignment factor $J$ is defined as
\begin{equation}
    M=M_{\max}\cdot J.
\end{equation}

In an underwater MI network, the transmit coil on the ship is often large. For portability, the receive coil of the MI receiver is much smaller than that of the transmitting coil. Under such condition, it can be assumed that the receive coil is placed within a uniform magnetic field generated by the transmit coil, the alignment factor for the mutual inductance between the two coils can be expressed as:
\begin{equation}
    f_J(J)=\left\{\begin{array}{ll}\frac{3}{2}(1-\frac{4}{\pi}\arccos\frac{|J|}{\sqrt{1-J^2}}),&\frac{1}{\sqrt3}\leq|J|<\frac{1}{\sqrt2},\\\frac{3}{2},&\frac{1}{\sqrt2}\leq|J|\leq1,\\0,&\text{otherwise.}\end{array}\right.
\end{equation}

Therefore, we can derive the probability density function $f_M(M)$ of $M$ using basic probability theory.

\subsection{Signal Model}

We denote the symbol to be transmitted as $\mathbf{X}=\left[ X_1,X_2,\cdots ,X_n \right] ^T$, and $X_k\sim \mathcal{C} \mathcal{N} (0,1), k=1,2,\cdots ,n$. After the digital-to-analog conversion (DAC), assuming the signal from the source is a sine wave, the current of the transmit loop when transmitting the \textit{k}-th symbol is given by
\begin{equation}
    I_{tk}=\mathfrak{R} \left\{ X_kI_te^{j2\pi ft} \right\}. 
\end{equation}

Then, based on Kirchhoff's voltage law, we obtain the following equations:
\begin{gather}
    U_{tk}=I_{tk}R_t-I_{rk}jwM,
    \\
    I_{tk}jwM=I_{rk}\left( R_L+R_r \right). 
\end{gather}

The receive current and receive voltage can be derived as follows:
\begin{align}
    I_{rk}&=\left\{ I_{tk}\frac{jwm}{R_L+R_r} \right\}, 
    \\
    U_{Lk}&=I_{rk}\times \,\,R_L+n_0
    \notag\\
    &=\mathfrak{R} \left\{ X_kI_{tk}e^{j2\pi ft}\frac{jwmR_L}{R_L+R_r} \right\} +n_0,
\end{align}
where $n_0$ represents the independent and identically distributed (i.i.d.) Gaussian white noise in the communication system, with zero mean and variance $\sigma_0^2$. After analog-to-digital conversion (ADC), the baseband signal can be expressed as
\begin{equation}
    y_k=X_kI_t\frac{jwmR_L}{R_L+R_r}+\tilde{n}_0,
\end{equation}
where $\tilde{n}_0$ represents the independent and identically distributed (i.i.d.) Gaussian white noise in the communication system, with zero mean and variance $\tilde{\sigma}_0^2$.

\subsection{Power model}
Due to the mutual inductance between the transmit and receive circuits, the transmit power is not solely determined by the parameters of the transmit circuit. Circuit equivalence also needs to be considered, and transmit and receive circuits will each has an equivalent resistance. Based on the Kirchhoff's voltage law, we can obtain
\begin{align}
    U_{tk}&=I_{tk}\times \left( R_t+\frac{\left( 2\pi f \right) ^2m^2}{R_L+R_r} \right), 
    \\
    P_{tk}&=\frac{1}{2}\mathbb{E} \left[ \Re \left( U_{tk}I_{tk} \right) \right] =\frac{1}{2}{I_{tk}}^2\times \left( R_t+am^2 \right), 
\end{align}
where $a=\frac{\left( 2\pi f \right) ^2}{R_L+R_r}$. The difference of the transmit power and the receive power varies with the mutual inductance. The relationships can be expressed as:
\begin{align}
    I_r&=I_t\frac{j\cdot 2\pi fM}{R_L+R_r},
    \\
    P_L&=\frac{1}{2}\mathbb{E} \left[ \Re \left( I_r^2R_L \right) \right] =\frac{1}{2}bI_{t}^{2}\tilde{\sigma}_0^2M^2,
\end{align}
where $b=\frac{\left( 2\pi f \right) ^2R_L}{\left( R_L+R_r \right) ^2\tilde{\sigma}_0^2}$.

\subsection{Statistical QoS Provisioning and Effective Capacity Model}

Although MI communication is regarded as a promising solution for underwater communication, improving the system throughput while ensuring user QoS remains a challenge. In EM wave communication, statistical QoS provisioning have been proven to be an effective and powerful technique for describing and implementing QoS provisioning that bind the delay of wireless real-time traffic\cite{1210731,9200891,8485468,4413145,10494937}. 

In mobile wireless networks, the authors of \cite{4290047} proposed a power and rate adaptive control scheme on a EM wave network. By combining information theory with the concept of effective capacity, the goal of this scheme is to maximize the system throughput subject to a given delay QoS constraint.

For a stationary and ergodic random service process, the probability that the queue length exceeds a threshold $Q_{th}$ decreases exponentially fast as $Q_{th}$ increases, which can be expressed as

\begin{equation}
    P\left( Q>Q_{th} \right) =e^{-\theta Q_{th}}.
\end{equation}

Based on the Large Deviation Principle, the authors of \cite{284868} demonstrate that under sufficient conditions, the queue length process converges in distribution to a random variable, such that
\begin{equation}
    -\lim_{Q_{th}\rightarrow \infty} \frac{\log \left( \mathrm{Pr}\{Q(\infty )>Q_{th}\} \right)}{Q_{th}}=\theta. 
\end{equation}

Here, $\theta$ is called the QoS exponent, which reflects the decay rate of the probability that the queue length exceeds a threshold. The larger value of $\theta$,  indicates that the service process demands a strict QoS, with a very low probability of queue overflow. On the other hand, if $\theta$ is small, it implies that the service process has a relaxed QoS, allowing for a higher probability of queue overflow. In particular, when $\theta \rightarrow \infty$, the service process cannot tolerate any probability of queue overflow, essentially enforcing the zero overflow probability. In contrast, when $\theta \rightarrow 0$, the service process can tolerate an arbitrarily large probability of queue overflow, essentially imposing no strict QoS requirement.

Based on effective bandwidth theory, the authors of \cite{1210731}, provide a dual definition of effective capacity. They define it as the maximum constant arrival rate that a given service process can support while meeting specified statistical delay-bounded QoS requirements. This is expressed as
\begin{equation}
    EC\left( \theta \right) \triangleq -\frac{1}{\theta}\log \left\{ \mathbb{E} _{\gamma}\left[ e^{-\theta R\left( t \right)} \right] \right\} .
\end{equation}
where $\mathbb{E} _{\gamma}\left[ \cdot  \right] $ represents the expected value of the SNR at the receiver. Let the sequence $\left\{ R\left[ i \right], \, i=1,2,\cdots \right\}$ represent the discrete, stationary, and ergodic random service process from the transmission circuit to the reception circuit, and let $S\left[ t \right] \triangleq \Sigma _{i=1}^{t}R\left[ i \right] $ denote the sum of the first $t$ terms of $R[i]$. Assuming that the Gartner-Ellis limit of $S(t)$ exists and is convex and differentiable, the effective capacity $E_C(\theta)$ is defined as
\begin{equation}
    S\left[ t \right] \triangleq \Sigma _{i=1}^{t}R\left[ i \right] .
\end{equation}

When the sequence $\left\{ R\left[ i \right], i=1,2,\cdots \right\} $ is an uncorrelated sequence, the above expression can be simplified to
\begin{equation}
    \mathrm{E}_{\mathrm{C}}(\theta )=-\frac{1}{\theta}\log \Bigl( \mathbb{E} _{\gamma}\left\{ e^{-\theta R[i]} \right\} \Bigr) ,
\end{equation}

\text

For the service process, we use the Shannon capacity to represent the maximum achievable data rate, assuming that both the transmitter and receiver have suitable encoding and decoding schemes that allow them to reach the channel capacity.
However, due to the hardware limitations of the transmitter, the maximum achievable rate is $R_{\max}$. For an MI system, the SNR at the load resistance of the receiving circuit is typically represented as
\begin{equation}
    \gamma =\frac{P_L}{\tilde{\sigma}_{0}^{2}}=\frac{1}{2}bI_{tk}^{2}M^2.
\end{equation}

The channel capacity can be represented as $ R\left( M \right) =\log _2\left( 1+\gamma \right) =\log _2\left( 1+bM^2I_{tk}^{2}\left( M \right) \right) $, the effective capacity  can be expressed as
\begin{align}
    &E_C\left( \theta \right) 
    \notag\\
    =&-\frac{1}{\theta}\log \left( \mathbb{E} _{M}\left[ \exp \left( -\theta R\left( M \right) \right) \right] \right) 
    \notag\\
    =&-\frac{1}{\theta}\log \left( \int_{-M_{\max}}^{M_{\max}}{\left( 1+\frac{1}{2}bM^2I_{tk}^{2}\left( M \right) \right) ^{-\beta}p_M\left( M \right) dM} \right) ,
\end{align}
where $\beta =\frac{\theta}{\ln 2}$. Under the assumption of a stationary and ergodic random sequence, for clarity, the time index $k$ will be omitted in the following sections of the paper to simplify the description.

\section{Optimal Current Control for QoS-aware Based MI Communication System}

Based on the system model described in Section II, the goal is to develop the optimal current control strategy in MI communication to maximize the effective capacity.

Because of the hardware limitations of the transmitter, we assume that the transmit current satisfies the peak current constraint as
\begin{equation}
    I_{tk}\leqslant I_{\max}\left( M \right) =\sqrt{\frac{2\left( 2^{R_{\max}}-1 \right)}{bM^2}},
\end{equation}
and the transmit power satisfies the average power constraint as 
\begin{equation}
    \mathbb{E} _{M}\left[ \frac{1}{2}{I_t}^2(M)\times \left( R_t+aM^2 \right) p_M\left( M \right) \right] \le \bar{P}.
\end{equation}

For convenience, let $\xi \left( M \right) =I_{t}^{2}\left( M \right) $. The objective function can then be formulated as follows:
\begin{align}
    \max -\frac{1}{\theta}\log \left( \mathbb{E} _M\left[ \left( 1+\frac{1}{2}bM^2\xi \left( M \right) \right) ^{-\beta}p_M\left( M \right) \right] \right) .
\end{align}

Since $-1/\theta \cdot \log \left( x \right) $ is a monotonically decreasing function of $x$, the objective function can be rewritten accordingly. we can express the optimization problem as follows:
\begin{align}
    \min & \int_{-M_{\max}}^{M_{\max}}{\left( 1+\frac{1}{2}bM^2\xi \left( M \right) \right) ^{-\beta}p_M\left( M \right) dM}
    \\
    s.t. & \int_{-M_{\max}}^{M_{\max}}{\frac{1}{2}\xi \left( M \right) \times \left( R_t+aM^2 \right) p_M\left( M \right) dM}\le \bar{P}
    \label{26}
    \\
     &   \xi(M)\leqslant\frac{2\left( 2^{R_{\max}}-1 \right)}{bM^2}
    \\
    &    \xi \left( M \right) \ge 0
\end{align}

\begin{theorem}
    The optimal current control scheme which is the solution to the optimization problem, denoted by $\xi _{opt}\left( M \right)$, is given by:
    \begin{equation}
        \xi _{opt}\left( M \right) =\begin{cases}
            0\text{,}&		0\le \left| M \right|\\
            &       \le M_1,\\
            \frac{2}{bM^2}\left\{ \left[ \frac{1}{\lambda _0}\times \frac{M^2}{R_t+aM^2} \right] ^{\frac{1}{\beta +1}}-1 \right\} ,&		M_1<\left| M \right|\\
            &		<M_2,\\
            \frac{2}{bM^2}\left( 2^{R_{\max}}-1 \right) ,&		M_2\le \left| M \right|\\
            &		\le M_{\max},\\
        \end{cases}
    \end{equation}
    where $M_1=\sqrt{\frac{R_t}{\frac{1}{\lambda _0}-a}}$, $M_2=\sqrt{R_t/\left( \frac{1}{\lambda _02^{R_{\max}\left( \beta +1 \right)}}-a \right)}$, $\lambda _0=\lambda _1/\left( b\beta \right) $, $\lambda_0$ can be solving numerically.
\end{theorem}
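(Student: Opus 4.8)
The plan is to recognize the problem as a convex functional optimization in the decision function $\xi(\cdot)$ and to solve it by Lagrangian duality combined with a pointwise (per-$M$) minimization, in the spirit of classical water-filling derivations. First I would verify convexity: for each fixed $M$ the map $\xi\mapsto\big(1+\tfrac12 bM^2\xi\big)^{-\beta}$ is convex and strictly decreasing on $\xi\ge 0$ since $b,\beta>0$, so the objective $\int_{-M_{\max}}^{M_{\max}}\big(1+\tfrac12 bM^2\xi(M)\big)^{-\beta}p_M(M)\,dM$ is a convex functional of $\xi(\cdot)$, while the average-power constraint~\eqref{26} and the box constraints $0\le\xi(M)\le 2(2^{R_{\max}}-1)/(bM^2)$ are affine in $\xi(\cdot)$. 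Because all constraints are affine and the problem is feasible (e.g.\ $\xi\equiv 0$), the refined Slater condition holds, so strong duality applies and the KKT conditions are necessary and sufficient for a global optimum.

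Next I would attach a single scalar multiplier $\lambda_1\ge 0$ to the average-power constraint while keeping the box constraints explicit. Since the objective and the power integrand are both integrals over $M$ with no coupling between distinct values of $M$, the Lagrangian separates, and it suffices to minimize, for almost every $M$, the function $g_M(\xi)=\big(1+\tfrac12 bM^2\xi\big)^{-\beta}+\tfrac{\lambda_1}{2}\,\xi\,(R_t+aM^2)$ over $\xi\in[0,\xi_{\max}(M)]$ with $\xi_{\max}(M)=2(2^{R_{\max}}-1)/(bM^2)$. As $g_M''>0$, $g_M$ is strictly convex, so its constrained minimizer is the projection onto $[0,\xi_{\max}(M)]$ of the unconstrained stationary point, which solves $g_M'(\xi)=0$, i.e.\ $\beta bM^2\big(1+\tfrac12 bM^2\xi\big)^{-\beta-1}=\lambda_1(R_t+aM^2)$. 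Solving for $\xi$ and setting $\lambda_0=\lambda_1/(b\beta)$ yields the interior expression $\xi=\tfrac{2}{bM^2}\big\{[\tfrac{1}{\lambda_0}\cdot\tfrac{M^2}{R_t+aM^2}]^{1/(\beta+1)}-1\big\}$.

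The three-region structure then follows by locating where this stationary value leaves the box. Writing $h(M):=\tfrac{M^2}{R_t+aM^2}=1/(a+R_t/M^2)$, which is strictly increasing in $|M|$, the interior value is nonnegative exactly when $h(M)\ge\lambda_0$ and does not exceed $\xi_{\max}(M)$ exactly when $h(M)\le\lambda_0 2^{R_{\max}(\beta+1)}$; solving $h(M)=\lambda_0$ and $h(M)=\lambda_0 2^{R_{\max}(\beta+1)}$ gives $M_1=\sqrt{R_t/(\tfrac{1}{\lambda_0}-a)}$ and $M_2=\sqrt{R_t/(\tfrac{1}{\lambda_0 2^{R_{\max}(\beta+1)}}-a)}$, and monotonicity of $h$ also yields $M_1\le M_2$, so the clipped minimizer is $0$ on $|M|\le M_1$, the interior formula on $M_1<|M|<M_2$, and the cap $2(2^{R_{\max}}-1)/(bM^2)$ on $M_2\le|M|\le M_{\max}$ --- exactly the claimed $\xi_{opt}(M)$, which is even in $M$ and hence consistent with the integration range. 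Finally, $\lambda_0$ is pinned down by complementary slackness: when the power budget is binding, substituting $\xi_{opt}$ into~\eqref{26} with equality produces a single scalar equation whose left-hand side is monotone in $\lambda_0$, so it admits a unique root, obtained numerically.

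The main obstacle I anticipate is the rigorous justification of the pointwise decomposition in the functional setting --- namely, arguing that KKT stationarity may be invoked Lebesgue-almost everywhere in $M$, that the resulting $\xi_{opt}(\cdot)$ is measurable and integrable enough for~\eqref{26} to make sense, and that the power constraint is indeed active at the optimum in the operating regime of interest (if it were slack, then $\lambda_1=0$ and the solution would collapse onto the peak-current cap wherever that cap is finite, a degenerate case that should be excluded). The remaining steps --- solving $g_M'=0$, deriving $M_1$ and $M_2$, and checking $M_1\le M_2$ --- are routine once this variational framework is in place.
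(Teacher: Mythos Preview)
Your proposal is correct and follows essentially the same route as the paper: recognize the problem as convex, form the Lagrangian, and apply the KKT conditions to obtain the interior stationary formula together with the two clipping thresholds $M_1$ and $M_2$. The only cosmetic difference is that the paper attaches separate multipliers to the box constraints and does a brief case analysis, whereas you dualize only the average-power constraint and handle the box by projecting the unconstrained minimizer of $g_M$; your treatment is in fact the cleaner of the two.
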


\begin{proof}
    Since this is a convex optimization problem, we can use the Karush-Kuhn-Tucker (KKT) conditions to derive the optimal current strategy. The Lagrangian function corresponding to the optimization problem can be written as
    \begin{align}
        &L\left( \xi \left( M \right) , \lambda _1, \lambda _2, \lambda _3 \right) =
        \notag\\
        &\int_{-M_{\max}}^{M_{\max}}{\left( 1+\frac{1}{2}bM^2\xi \left( M \right) \right) ^{-\beta}p_M\left( M \right) dM}+
        \notag\\
        &\lambda _1\int_{-M_{\max}}^{M_{\max}}{\frac{1}{2}\xi \left( M \right) \times \left( R_t+aM^2 \right) p_M\left( M \right) dM}+
        \notag\\
        &\lambda _2\log _2\left( 1+\frac{1}{2}bM^2\xi \left( M \right) \right) -\lambda _3\xi \left( M \right) .
    \end{align}

    The KKT conditions corresponding to the optimization problem can be written as follows:
    \begin{equation}
        \begin{cases}
            \frac{\partial L\left( \xi \left( M \right) , \lambda _1, \lambda _2 \right)}{\partial \xi \left( M \right)}=0,\\
            \lambda _1\left[ \int_{-M_{\max}}^{M_{\max}}{\frac{1}{2}\xi \left( M \right) \times \left( R_t+aM^2 \right) p_M\left( M \right) dM}-\bar{P} \right] =0,\\
            \lambda _2\left[ \log _2\left( 1+\frac{1}{2}bM^2\xi \left( M \right) \right) -R_{\max} \right] =0,\\
            -\lambda _3\xi \left( M \right) =0\\
            \lambda _1, \lambda _2,\lambda _3\ge 0,
        \end{cases}
    \end{equation}
    Where $\lambda_1$, $\lambda_2$ and $\lambda_3$ is the Lagrange multiplier.

    When $\lambda_2>0$, we have $\lambda_3=0$, because in this case $\xi(M)=0$, additionally, we also have $\log _2\left( 1+\frac{1}{2}bM^2\xi \left( M \right) \right) =0$, which leads to the solution for $\xi \left( M \right) =2R_{\max}/\left( R_t+aM^2 \right) $; When $\lambda_3>0$, we have $\lambda_2=0$; Since both $\lambda_1$ and $\lambda_2$ cannot be simultaneously greater than zero, then $\lambda_1=\lambda_2=0$, by solving the KKT conditions, the following can be obtained as:
    \begin{equation}
        \xi _{opt}\left( M \right) =\frac{2}{bM^2}\left\{ \left[ \frac{1}{\lambda _0}\times \frac{M^2}{R_t+aM^2} \right] ^{\frac{1}{\beta +1}}-1 \right\}.
    \end{equation}
\end{proof}

Let 
\begin{align}
    \begin{cases}
        \xi _1\left( M \right) =\frac{2}{bM^2}\left\{ \left[ \frac{1}{\lambda _0}\times \frac{M^2}{R_t+aM^2} \right] ^{\frac{1}{\beta +1}}-1 \right\},\\
        \xi _2\left( M \right) =\frac{2}{bM^2}\left( 2^{R_{\max}}-1 \right),\\
    \end{cases}
\end{align}
the effective capacity under the optimal current control strategy can be obtained as:
\begin{align}
    &E_C\left( \theta ,\xi \left( M \right) \right) =
    \notag\\
    &-\frac{2}{\theta}\log \int_{M_1}^{M_2}{2\left( 1+\frac{1}{2}bM^2\xi _1\left( M \right) \right) ^{-\beta}p_M\left( M \right) dM}+
    \notag\\
    &-\frac{2}{\theta}\log \int_{M_2}^{M_{\max}}{2\left( 1+\frac{1}{2}bM^2\xi _2\left( M \right) \right) ^{-\beta}p_M\left( M \right) dM}.
\end{align}

In particular, when $\theta \rightarrow 0$, the communication system can tolerate arbitrarily large queuing delays, and the optimal current control strategy reduces to
\begin{equation}
    \label{38}
    \xi _{opt}\left( M \right) =\begin{cases}
        0,&		0\le \left| M \right|\le M_1,\\
        \frac{2}{b}\left\{ \frac{1}{\lambda _0}\times \frac{1}{R_t+aM^2}-\frac{1}{M^2} \right\} ,&		M_1<\left| M \right|<M_2,\\
        \frac{2}{bM^2}\left( 2^{R_{\max}}-1 \right)&		M_2\le \left| M \right|\\
        &       \le M_{\max},\\
    \end{cases}
\end{equation}
where $M_1=\sqrt{\frac{R_t}{\frac{1}{\lambda _0}-a}}$, $M_2=\sqrt{R_t/\left( \frac{1}{\lambda _02^{R_{\max}}}-a \right)}$. 

In this case, the system can tolerate arbitrarily long queue delays, because $R_t\gg aM^2$, the optimal current control strategy converges to the channel water-filling theorem under MI communication.

When $\theta \rightarrow \infty$, the communication system cannot tolerate any non-zero delay. In this case, $M_1\rightarrow 0$, $M_2\rightarrow \infty$, and the system is not affected by the peak rate limitation, with the outage probability being zero. The optimal current control strategy is reduced to
\begin{equation}
    E_C\left( \theta ,\xi \left( M \right) \right) =\int_{M_1}^{M_2}{2\left( 1+\frac{1}{2}bM^2\xi _1\left( M \right) \right) ^{-\beta}p_M\left( M \right) dM}.
\end{equation}

\section{Simulation Result}

We conduct simulations to evaluate the performance of the optimal current control strategy that we propose in MI communication. Throughout our simulations, we set the operating frequency $f=2~\mathrm{kHz}$, the transmission distance $d=3~\mathrm{m}$, the loop resistance $R_t=R_r=20~\Omega$, the load resistance $R_L=200~\Omega$, the radius of the coils $a_t=4~\mathrm{m}$, $a_r=0.25~\mathrm{m}$, the number of turns $N_t=N_r=20$, the maximum normalized achievable rate $R_{\max}=0.5~\mathrm{bps/Hz}$, the average transmit power $\bar{P}=9~\mathrm{W}$, and the noise power $\tilde{\sigma}_{0}^{2}=2.5\times10^{-3}~\mathrm{W}$. 

\begin{figure}[htbp]
    \centering
    \includegraphics[width = 3.5in]{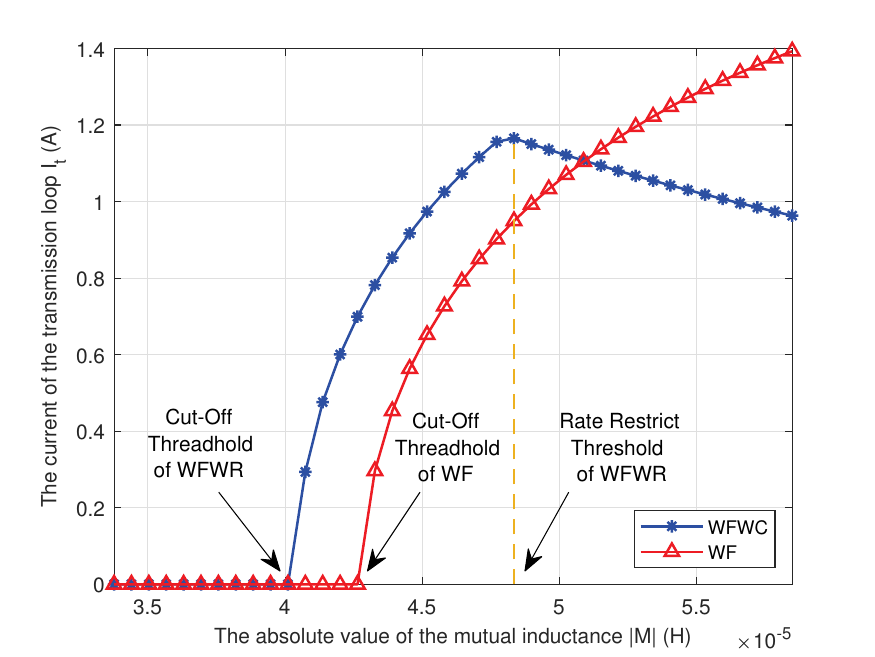}
    \caption{The transmit currents comparison between the WFWC and the WF strategy.}
    \label{The transmit currents comparison between the WFWR and the WF strategy.}
\end{figure}
    
Figure~\ref{The transmit currents comparison between the WFWR and the WF strategy.} shows the water-filling (WF) strategies with and without peak current constraint (WFWC) when arbitrarily long delays can be tolerated ($\theta=0$). Compared with the WF strategy, it can be observed that imposing a current rate constraint introduces the current rate constraint threshold. When $|M|$ exceeds the peak rate constraint threshold, the current decreases to reduce energy consumption while maintaining a constant achievable rate. When comparing the cut-off thresholds of the two strategies, it is evident that the cut-off threshold of WFWR shifts to the left, indicating that WFWR uses power more efficiently by saving it for when the current is lower. This leads to a reduction of approximately $11\%$ in the probability of outage, thus improving the stability of information transmission.
\begin{figure}[htbp]
    \centering
    \includegraphics[width = 3.5in]{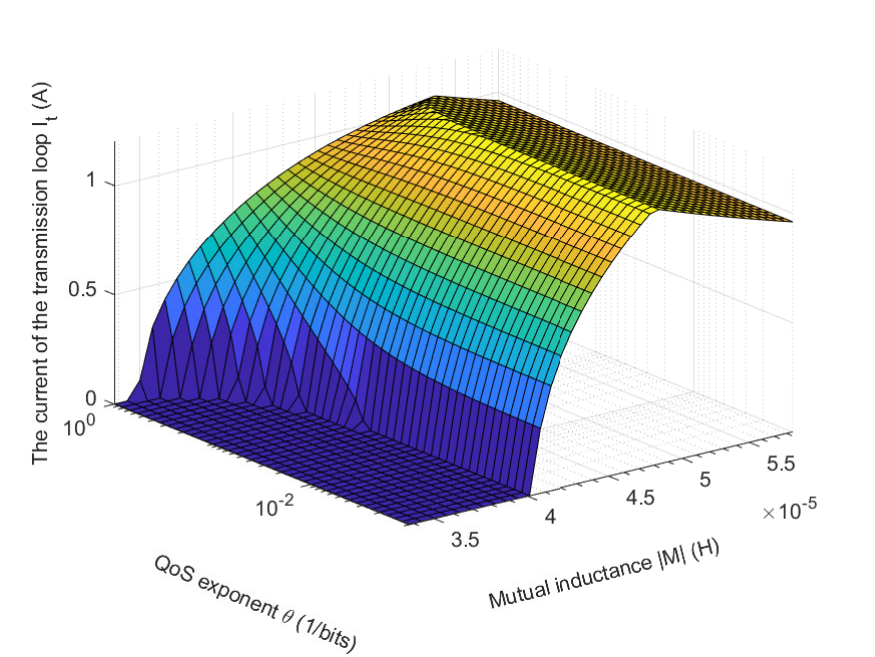}
    \caption{The optimal current adaptation strategy with current constraint.}
    \label{The optimal current adaptation strategy with rate restricted.}
\end{figure}

Figure~\ref{The optimal current adaptation strategy with rate restricted.} illustrates the optimal current control strategy under different statistical as $\theta$ increases, until it becomes smaller than $M_{\min}$, at which point it no longer constrains the system. The peak current constraint threshold increases with the increase of $\theta$, until it becomes larger than $M_{\max}$. When $\theta\rightarrow\infty$, the current control strategy becomes a continuously differentiable function. As $\theta$ decreases, the optimal current control strategy converges to WFWR.

\begin{figure}[htbp]
    \centering
    \includegraphics[width = 3.5in]{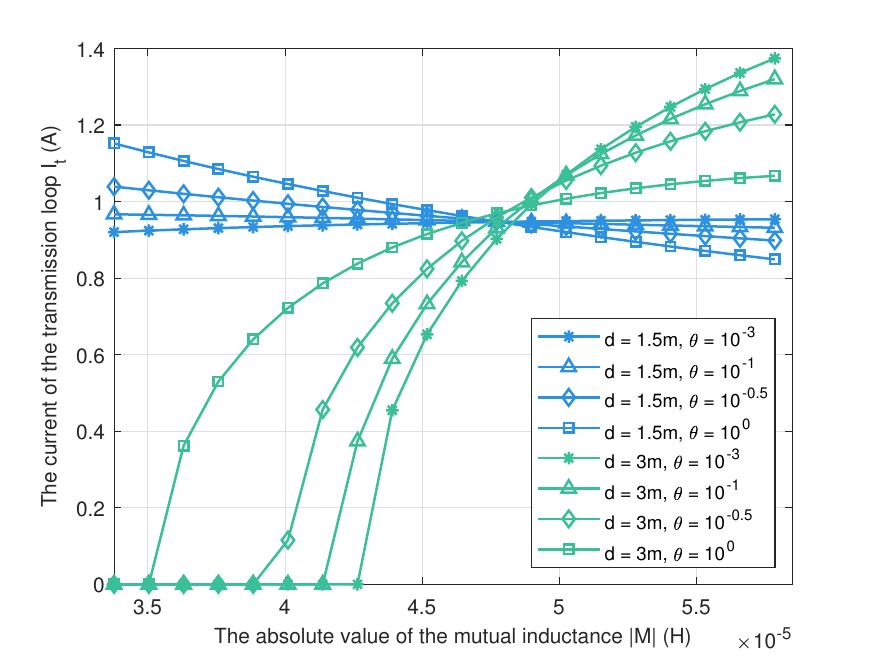}
    \caption{The optimal power-adaptation policy without current restriction.}
    \label{The optimal power-adaptation policy without rate restriction.}
\end{figure}

Figure~\ref{The optimal power-adaptation policy without rate restriction.} demonstrates the optimal current control strategy under different distances and statistical QoS requirements, without the peak power constraint. It is easy to observe that, although the closed-form expressions for different distances are the same, the actual current control differs. This is particularly evident for short distances, and $\theta\rightarrow 0$, the current control strategy approximates a constant current distribution.
\begin{figure}[htbp]
    \centering
    \includegraphics[width = 3.5in]{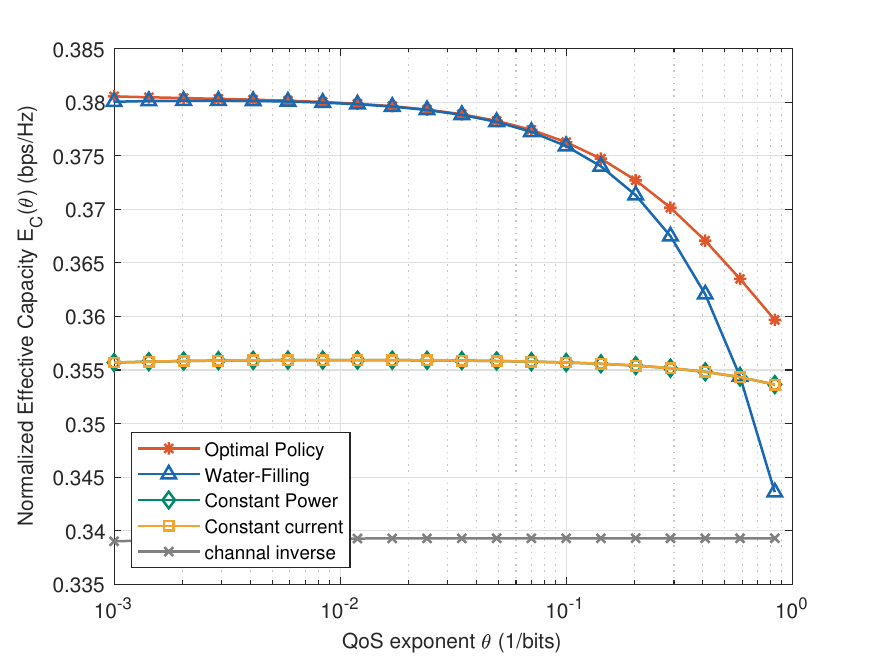}
    \caption{Normalized effective capacities under our optimal current control strategy with other classic strategies.}
    \label{Normalized effective capacities under our optimal current control strategy with other classic strategies.}
\end{figure}

The comparison of the normalized effective capacity for different current adaptive allocation schemes is shown in Fig.~\ref{Normalized effective capacities under our optimal current control strategy with other classic strategies.}. When $\theta$ is small, the optimal current adaptive control strategy resembles the water-filling strategy. However, as $\theta$ increases, it does not converge to the channel inversion current control strategy. As the statistical QoS requirement becomes more stringent, the normalized effective capacity of all current control schemes decreases. However, the proposed current adaptive control scheme  consistently achieves the highest effective capacity.

\section{Conclusion}
In the context of underwater MI networks, we apply MI communication technology to establish a communication system with statistical QoS provisioning. Mathematical expressions for transmit power, receive power, and effective capacity are derived. Based on the statistical QoS provisioning theory, we consider an optimization problem to maximize effective capacity, with constraints on average power and peak current. The closed-form expression for the optimal current control strategy is derived. Finally, through numerical simulations, the effects of the QoS exponent $\theta$ and the communication distance $d$ on the optimal current control strategy are verified, demonstrating that the proposed strategy can significantly improve the network throughput while provisioning delay QoS requirements.

\bibliographystyle{IEEEtran}
\bibliography{references.bib}

\begin{thebibliography}{10}
\providecommand{\url}[1]{#1}
\csname url@samestyle\endcsname
\providecommand{\newblock}{\relax}
\providecommand{\bibinfo}[2]{#2}
\providecommand{\BIBentrySTDinterwordspacing}{\spaceskip=0pt\relax}
\providecommand{\BIBentryALTinterwordstretchfactor}{4}
\providecommand{\BIBentryALTinterwordspacing}{\spaceskip=\fontdimen2\font plus
\BIBentryALTinterwordstretchfactor\fontdimen3\font minus \fontdimen4\font\relax}
\providecommand{\BIBforeignlanguage}[2]{{%
\expandafter\ifx\csname l@#1\endcsname\relax
\typeout{** WARNING: IEEEtran.bst: No hyphenation pattern has been}%
\typeout{** loaded for the language `#1'. Using the pattern for}%
\typeout{** the default language instead.}%
\else
\language=\csname l@#1\endcsname
\fi
#2}}
\providecommand{\BIBdecl}{\relax}
\BIBdecl

\bibitem{5199549}
Z.~Sun and I.~F. Akyildiz, ``Underground wireless communication using magnetic induction,'' in \emph{2009 IEEE International Conference on Communications}, 2009, pp. 1--5.

\bibitem{10210297}
J.~Wang, W.~Cheng, W.~Zhang, and H.~Zhang, ``Backscatter based bidirectional full-duplex magnetic induction communications,'' \emph{IEEE Transactions on Communications}, vol.~71, no.~11, pp. 6258--6271, 2023.

\bibitem{8464253}
S.~Kisseleff, I.~F. Akyildiz, and W.~H. Gerstacker, ``Survey on advances in magnetic induction-based wireless underground sensor networks,'' \emph{IEEE Internet of Things Journal}, vol.~5, no.~6, pp. 4843--4856, 2018.

\bibitem{7794767}
G.~Dumphart and A.~Wittneben, ``Stochastic misalignment model for magneto-inductive {SISO} and {MIMO} links,'' in \emph{2016 IEEE 27th Annual International Symposium on Personal, Indoor, and Mobile Radio Communications (PIMRC)}, 2016, pp. 1--6.

\bibitem{8851668}
W.~Zhou, S.~Wang, and Y.~Shin, ``Dynamic channel modeling for underwater magnetic induction communication,'' in \emph{2019 IEEE VTS Asia Pacific Wireless Communications Symposium (APWCS)}, 2019, pp. 1--6.

\bibitem{9681845}
J.~Wang, W.~Cheng, W.~Zhang, W.~Zhang, and H.~Zhang, ``Multi-frequency access for magnetic induction-based {SWIPT},'' \emph{IEEE Journal on Selected Areas in Communications}, vol.~40, no.~5, pp. 1679--1691, 2022.

\bibitem{7428884}
H.-J. Kim, J.~Park, K.-S. Oh, J.~P. Choi, J.~E. Jang, and J.-W. Choi, ``Near-field magnetic induction {MIMO} communication using heterogeneous multipole loop antenna array for higher data rate transmission,'' \emph{IEEE Transactions on Antennas and Propagation}, vol.~64, no.~5, pp. 1952--1962, 2016.

\bibitem{9580360}
J.~Wang, W.~Cheng, and H.~Zhang, ``Optimal current control for {MISO-OFDM} based through-the-earth communications with magnetic induction,'' in \emph{2021 IEEE/CIC International Conference on Communications in China (ICCC)}, 2021, pp. 133--138.

\bibitem{frankl1986electromagnetic}
\BIBentryALTinterwordspacing
D.~Frankl, \emph{Electromagnetic Theory}.\hskip 1em plus 0.5em minus 0.4em\relax Prentice-Hall, 1986. [Online]. Available: \url{https://books.google.co.jp/books?id=tbvvAAAAMAAJ}
\BIBentrySTDinterwordspacing

\bibitem{1210731}
D.~Wu and R.~Negi, ``Effective capacity: a wireless link model for support of quality of service,'' \emph{IEEE Transactions on Wireless Communications}, vol.~2, no.~4, pp. 630--643, 2003.

\bibitem{9200891}
J.~Wang, W.~Cheng, and H.~Zhang, ``Caching and {D2D} assisted wireless emergency communications networks with statistical {QoS} provisioning,'' \emph{Journal of Communications and Information Networks}, vol.~5, no.~3, pp. 282--293, 2020.

\bibitem{8485468}
J.~Wang and W.~Cheng, ``Heterogeneous quality of experience guarantees over wireless networks,'' \emph{China Communications}, vol.~15, no.~10, pp. 51--59, 2018.

\bibitem{4413145}
H.~Su and X.~Zhang, ``Cross-layer based opportunistic {MAC} protocols for {QoS} provisionings over cognitive radio wireless networks,'' \emph{IEEE Journal on Selected Areas in Communications}, vol.~26, no.~1, pp. 118--129, 2008.

\bibitem{10494937}
J.~Wang, W.~Cheng, W.~Zhang, and H.~Liang, ``Statistical {QoS} provisioning architecture for 6g satellite-terrestrial integrated networks,'' \emph{Journal of Communications and Information Networks}, vol.~9, no.~1, pp. 34--42, 2024.

\bibitem{4290047}
J.~Tang and X.~Zhang, ``Quality-of-service driven power and rate adaptation over wireless links,'' \emph{IEEE Transactions on Wireless Communications}, vol.~6, no.~8, pp. 3058--3068, 2007.

\bibitem{284868}
C.-S. Chang, ``Stability, queue length, and delay of deterministic and stochastic queueing networks,'' \emph{IEEE Transactions on Automatic Control}, vol.~39, no.~5, pp. 913--931, 1994.

\end{thebibliography}

\end{document}